\documentclass[conference, a4paper]{IEEEtran}           

\usepackage[utf8]{inputenc} 
\usepackage[T1]{fontenc}
\usepackage{url, balance}              

\usepackage[cmex10]{amsmath}  
\usepackage{amsfonts}
\usepackage{amssymb}

\usepackage{dsfont}
\newcommand{\mathbbm}[1]{\mathds{#1}}  

\usepackage[normalem]{ulem}
\usepackage{mathtools}
\usepackage{graphicx}
\usepackage{nicefrac}
\usepackage{bm}
\usepackage{xcolor, xpatch}
\usepackage[linesnumbered,ruled]{algorithm2e}
\usepackage{tikz}
\usetikzlibrary{positioning, arrows.meta}
\usepackage{enumitem}
\usepackage{booktabs}         
\usepackage[backend=bibtex,bibstyle=ieee,citestyle=numeric-comp,url=false,doi=false,isbn=false]{biblatex}
\usepackage[font=footnotesize,skip=10pt]{caption}

\usepackage{pgfplots}

\makeatletter
\xpatchcmd{\proof}{\topsep6\p@\@plus6\p@\relax}{}{}{}
\makeatother

\usepackage{amsthm}  
\usepackage{aliascnt} 
\usepackage[draft]{hyperref} 
\usepackage{cleveref} 

\newtheorem{theorem}{Theorem}
\crefname{theorem}{Theorem}{Theorems}

\theoremstyle{plain}

\newaliascnt{lemma}{theorem}
\newtheorem{lemma}[lemma]{Lemma}
\aliascntresetthe{lemma}
\crefname{lemma}{Lemma}{Lemmas}

\newaliascnt{definition}{theorem}
\newtheorem{definition}[definition]{Definition}
\aliascntresetthe{definition}
\crefname{definition}{Definition}{Definitions}

\newaliascnt{corollary}{theorem}

\aliascntresetthe{corollary}
\crefname{corollary}{Corollary}{Corollaries}

\newaliascnt{claim}{theorem}
\newtheorem{claim}[claim]{Claim}
\aliascntresetthe{claim}
\crefname{claim}{Claim}{Claims}

\newaliascnt{conjecture}{theorem}

\aliascntresetthe{conjecture}
\crefname{conjecture}{Conjecture}{Conjectures}

\newaliascnt{question}{theorem}

\aliascntresetthe{question}
\crefname{question}{Question}{Questions}

\newaliascnt{problem}{theorem}

\aliascntresetthe{problem}
\crefname{problem}{Problem}{Problems}

\newaliascnt{example}{theorem}

\aliascntresetthe{example}
\crefname{example}{Example}{Examples}

\newaliascnt{oquestion}{theorem}

\aliascntresetthe{oquestion}
\crefname{oquestion}{Open Question}{Open Questions}

\newaliascnt{remark}{theorem}
\newtheorem{remark}[remark]{Remark}
\aliascntresetthe{remark}
\crefname{remark}{Remark}{Remarks}

\newcommand{\bfc}{{\boldsymbol c}}

\newcommand{\bfu}{{\boldsymbol u}}
\newcommand{\bfv}{{\boldsymbol v}}

\newcommand{\bfx}{{\boldsymbol x}}
\newcommand{\bfy}{{\boldsymbol y}}

\newcommand{\bfzero}{{\mathbf 0}}
\newcommand{\bfone}{{\mathbf 1}}

\newcommand{\cC}{\mathcal{C}}
\newcommand{\cD}{\mathcal{D}}

\newcommand{\cI}{\mathcal{I}}

\DeclarePairedDelimiter\parenv{\lparen}{\rparen}
\DeclarePairedDelimiter\sparenv{\lbrack}{\rbrack}

\DeclarePairedDelimiterX\mathset[2]{\lbrace}{\rbrace}{#1 : #2}
\DeclarePairedDelimiterX\multset[2]{\lbrace\!\!\lbrace}{\rbrace\!\!\rbrace}{#1 : #2}

\newcommand{\floor}[1]{\lfloor #1 \rfloor}

\newcommand{\Fq}{\mathbb{F}_q}

\newcommand{\Ftwo}{\mathbb{F}_2}

\newcommand{\ballins}[2]{\cI_{#2}(#1)}
\newcommand{\balldel}[2]{\cD_{#2}(#1)}
\newcommand{\balldelp}[2]{\cD'_{#2}(#1)}
\newcommand{\ballinsp}[2]{\cI'_{#2}(#1)}

\newcommand{\prob}{\mathrm{Pr}}

\newcommand{\vt}[2]{\mathcal{VT}_{#1}(#2)}
\newcommand{\tauD}{\tau_{\mathrm{D}}}
\newcommand{\tauI}{\tau_{\mathrm{I}}}
\newcommand{\indicator}{\mathbbm{1}}
\newcommand{\listsub}[2]{\bar{L}_\mathrm{sub}(#1,#2)}
\newcommand{\listins}[2]{\bar{L}_\mathrm{ins}(#1,#2)}
\newcommand{\listdel}[2]{\bar{L}_\mathrm{del}(#1,#2)}

\bibliography{resources/literature}

\usepackage{fancyhdr}

\IEEEoverridecommandlockouts

\begin{document}
\title{On the Average List Size of VT Codes under\\ Two Insertions or Two Deletions \\

\thanks{The research was funded by the European Union (ERC, DNAStorage, 101045114 and EIC, DiDAX 101115134) and by the German Research Foundation (DFG) under Grant no. WA 3907/12-1. Views and opinions expressed are however those of the authors only and do not necessarily reflect those of the European Union or the European Research Council Executive Agency. Neither the European Union nor the granting authority can be held responsible for them.}
}
\author{%
  \IEEEauthorblockN{Anisha Banerjee\IEEEauthorrefmark{1},  Roni Con\IEEEauthorrefmark{2},
                    Antonia Wachter-Zeh\IEEEauthorrefmark{1},
                    and Eitan Yaakobi\IEEEauthorrefmark{3}}
  \IEEEauthorblockA{\IEEEauthorrefmark{1}%
                    School of Computation, Information and Technology, Technical University of Munich (TUM), Munich, Germany}
  \IEEEauthorblockA{\IEEEauthorrefmark{2}%
School of Electrical \& Computer Engineering, Tel Aviv University, Tel Aviv, Israel}
  \IEEEauthorblockA{\IEEEauthorrefmark{3}%
                    Department of Computer Science, Technion - Israel  Institute  of  Technology, Haifa 3200003, Israel}
    \IEEEauthorblockA{Email: anisha.banerjee@tum.de, 
       ronicon@tauex.tau.ac.il, 
      antonia.wachter-zeh@tum.de,
      yaakobi@cs.technion.ac.il}
      \\[-5ex]

}
\maketitle

\begin{abstract}
    Levenshtein's classical theorem shows that any code that can uniquely decode from $t$ deletions can also uniquely decode from $t$ insertions. This symmetry, however, does not generally hold in the list-decoding regime.
    Several prior results on the capacity and radii of list-decoding from insertions and deletions suggest that insertions are easier to handle than deletions. 
    In an effort to better understand this asymmetry between insertions and deletions under list-decoding, we study an analogous version of McEliece's \emph{average list size problem}. 
    More precisely, given a binary code $\cC$ of length $n$ that uniquely corrects $t'$ insertions (or deletions), we are interested in the average number of codewords returned by a $t$-insertion (or deletion) list decoder, where $t' < t$. 
    Focusing on the case of $t'=1$ and $t=2$, we find that under two insertions, this quantity is at most $(n+12)/4$ for any VT code of length $n$, whereas under two deletions it is at least $n/2 - 10$ when averaged over all length-$n$ VT codes. The nearly factor-of-two gap between these upper and lower bounds serves to confirm the relative ease of list-decoding of insertions compared to deletions from an average-case perspective.
\end{abstract}

\section{Introduction} \label{sec::intro}

Insertion and deletion errors and their error-correcting codes are objects of great mathematical intrigue. They find use in communication systems plagued by timing uncertainty~\cite{sklarDigitalCommunicationsFundamentals2021, kinnimentSynchronizationArbitrationDigital2007} and in storage paradigms such as DNA-based data storage~\cite{heckelCharacterizationDNAData2019, yazdiDNABasedStorageTrends2015} and racetrack memories~\cite{cheeCodingRacetrackMemories2018}. The study of these errors dates back to 1965, when Levenshtein~\cite{levenshtein1966binary} not only proposed an asymptotically-optimal construction to correct a single deletion or a single insertion, known in literature as the Varshamov-Tenengol'ts (VT) code, but also established an equivalence between a code that uniquely corrects $t$ insertions to one that uniquely corrects any combination of $t_1$ insertions and $t_2$ deletions, as long as $t_1 + t_2 \leq t$. 
In \emph{list-decoding}, the decoder is asked to output a small list of codewords that are within some radius, say $\tau$ (which is greater than the unique decoding radius), of the received word. 
Notably, insertions and deletions are not equivalent in the list-decoding setting; that is, a code that is list-decodable from $t$ insertions with a small list size does not necessarily imply it is also list-decodable from $t$ deletions with a small list size.

The problem of list-decoding was introduced by Elias~\cite{elias1957list} and Wozencraft~\cite{wozencraft1958list}, and is of great relevance due to its applications in information theory~\cite{ahlswedeChannelCapacitiesList1973, Blinovskii1986, eliasErrorcorrectingCodesList1991} and theoretical computer science~\cite{caiHardnessPermanent1999, guruswami2009unbalanced, sivakumarMembershipComparableSets1999, sudanPseudorandomGeneratorsXOR2001, lund2020list}. While list-decodable codes and algorithmic list-decoding under the Hamming metric~\cite{elias1957list, wozencraft1958list, sudan1997decoding, guruswami1998improved, sudanListDecodingAlgorithms2000, guruswamiAlgorithmicResultsList2006, guruswamiExplicitCodesAchieving2008, perniceListDecodingCapacityImplies2025, fathollahiTimeSpaceEfficientList2026} have been extensively investigated over the years, the body of work on list-decoding under insertions and deletions is fairly limited in comparison. As far as we know, the first work in the latter category is by Guruswami and Wang \cite{guruswami2017deletion}, who constructed an $\Omega(\varepsilon^3)$ rate binary code that is list decodable from $(1/2 - \varepsilon)$-fraction of deletions for any $\varepsilon > 0$. Subsequently, Haeupler, Shahrasbi and Sudan~\cite{haeupler2018synchronization} constructed list-decodable codes for insertions and deletions by leveraging an object known as synchronization strings \cite{haeupler2021synchronization}. 
Another line of research focuses on lower bounds on the list-decoding radius under insertions and deletions~\cite{wachter-zehListDecodingInsertions2018, hayashiListDecodabilityInsertions2020, liuLowerBoundListDecodability2023, yangImprovedJohnsontypeBounds2026}. 
An adjacent problem concerns determining the capacity of list-decoding under insertions and deletions~\cite{guruswami2017deletion, haeupler2018synchronization, guruswami2020optimally, haeuplerRateDistanceTradeoffsListDecodable2022, conInsertionListDecodingCapacity2026}, i.e., the largest code rate for a given fraction of errors such that the list size is upper bounded by a quantity that scales polynomially with the block length of~the code. Recently, Kiah~\textit{et al.}~\cite{kiahConstantListInsertionDeletionCodes2026} studied the same problem under the assumption that the list size $L$ is a fixed constant.

Across these prior works, we encounter multiple results that expose the aforementioned asymmetry between list-decoding from insertions and list-decoding from deletions. For instance, Haeupler, Shahrasbi and Sudan~\cite[Theorem 1.3]{haeupler2018synchronization} show that the maximum fraction of deletion errors that can be list-decoded by a code of rate $R$ is at most $1-R$, while any number of insertion errors can always be corrected, given that the underlying alphabet is sufficiently large. 
In \cite{hayashiListDecodabilityInsertions2020}, Hayashi and Yasunaga showed that any code with a large Levenshtein distance has a list-decoding radius under insertion errors that exceeds the one under deletion errors. 
A more recent result by Con, Doron and Ribeiro~\cite{conInsertionListDecodingCapacity2026} demonstrates that the capacity of list-decoding from insertions is significantly higher than the upper bound on the list-decoding capacity under deletions, in the regime where the fraction of errors goes to $0$.

This work aims to demonstrate this asymmetry from a different perspective inspired by McEliece's average list size problem~\cite{mcelieceAverageListSize2003}, which is stated below.

\textbf{Question.} \textit{Let $\cC_{t'}$ be a code that uniquely corrects $t' \geq 1$ substitutions. What is the expected number of non-causal codewords returned by a list decoder of $\cC_{t'}$ that decodes from $t$ substitutions, denoted by $\listsub{t}{\cC_{t'}}$, where $t' < t$ (the average is over all possible inputs to the algorithm)?
}

In particular, McEliece considered the Guruswami-Sudan decoder for Reed-Solomon (RS) codes, and presented exact formulas and simplified approximations (for substitutions) to the above quantity to establish that in most cases of practical interest, the Guruswami-Sudan decoder always returns a single codeword, i.e., the average list size is almost always one, or equivalently, $\listsub{t}{\cC_{t'}}$ is typically $0$. 

We study McEliece's problem in the settings of $t=2$ insertions and $t=2$ deletions, respectively, for binary codes that uniquely correct a single ($t'=1$) insertion or deletion. 
Specifically, for a code $\cC'$ of length $n$, we wish to determine $\listins{2}{\cC'}$ (resp. $\listdel{2}{\cC'}$), i.e., the average list size of a decoder of $\cC'$ under $t=2$ insertions (resp. deletions). We find that $\listins{2}{\cC'}$ is upper-bounded by $(n+12)/4$ (\Cref{th::upper-bound-2-ins}) where $\cC'$ is a VT code, while $\listdel{2}{\cC'}$, when averaged over all VT codes of length $n$, is lower-bounded by $n/2-10$ (\Cref{th::lower-bound-2-del-avg-vt}). These results not only corroborate the relative ease of list-decoding from insertions compared to deletions, but also highlight a considerable gap between the average list size for two insertions and that for two deletions.

\section{Preliminaries}

Throughout, we let $\Fq$ denote a finite field of order $q$. For any two integers $i,j$ such that $i < j$, we define $[i,j] \triangleq \{i,i+1,\ldots,j\}$. 
A \emph{run} of a vector $\bfx  \in \Fq^n$ is defined as a maximal substring of $\bfx$ composed of identical symbols. The number of runs in $\bfx$ is denoted by $\rho(\bfx)$. The $t$-deletion sphere of $\bfx \in \Fq^n$, denoted by $\balldel{\bfx}{t}$, is  the set of all sequences of length $n-t$ that can be obtained by deleting exactly $t$ symbols from $\bfx$. Similarly, the $t$-insertion sphere of $\bfx$, denoted as $\ballins{\bfx}{t}$, is the set of all length-$(n+t)$ sequences that arise from $\bfx$ following the insertion of exactly $t$ symbols from $\Fq$ into $\bfx$. 
It is well known that the insertion sphere is regular, i.e., the quantity  $|\ballins{\bfx}{t}|$ does not depend on $\bfx$ and equals $I^{(q)}_t(n) \triangleq \sum_{i=0}^{t} \binom{n+t}{i} (q-1)^i$~\cite{levenshteinEfficientReconstructionSequences2001a} (The superscript is dropped whenever $q$ is obvious from the context). We remind the reader that is not the case with deletions, i.e., $|\balldel{\bfx}{t}|$ depends on $\bfx$, and more specifically, on the number of runs in $\bfx$. We also define the indicator function $\indicator(\cdot)$, which takes an event as its argument and evaluates to $1$ if the event holds and $0$ otherwise.

Our error-correcting code of interest is the binary VT code~\cite{levenshtein1966binary}, the formal definition of which is given below.
\begin{definition} \label{def::vt-codes}
    For an integer $a \in [0,n]$, the VT code $\vt{a}{n}$ is the set of binary vectors
    \begin{equation*}
        \vt{a}{n} \triangleq \mathset*{\bfc = (c_1,\ldots,c_n)}{\sum_{i=1}^{n} i\cdot c_i \equiv a \bmod{(n+1})}.
    \end{equation*}
\end{definition}

Levenshtein showed that for any $a \in [0,n]$, $\vt{a}{n}$ can correct either a single insertion or a single deletion. It is interesting to note that the space $\Ftwo^n$ is partitioned by the union of all VT codes, i.e., $\bigcup_{a=0}^n \vt{a}{n} = \Ftwo^n$. VT codes are perfect under a single deletion~\cite{sloaneSingleDeletionCorrectingCodes2002}, implying that for all $a \in [0,n]$, each vector $\bfy \in \Ftwo^{n-1}$ decodes uniquely to a codeword in $\vt{a}{n}$.

For the purpose of our analysis in the upcoming sections, we will need
the following bounds on the cardinality of VT codes, the proofs of which can be found in Appendix~\ref{app::vt-upper-bound}.





\begin{lemma} \label{prop::vta-lower bound}\label{lem::vt-bounds}
    For any $n \geq 2$ and any $a \in [0,n]$, it holds that
    \begin{equation*}
        \frac{2^n}{n+1} \parenv*{1-2^{-\frac{2n-1}{3}}} \! < |\vt{a}{n}| < \frac{2^n}{n+1}\parenv*{1 + n\,2^{-\frac{2(n+1)}{3}}}.
    \end{equation*}
\end{lemma}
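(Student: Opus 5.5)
The plan is to use the standard character-sum (discrete Fourier) formula for the number of binary vectors with a prescribed weighted sum modulo $m \triangleq n+1$. Let $\omega = e^{2\pi \mathrm{i}/m}$. Then
\begin{equation*}
|\vt{a}{n}| = \frac{1}{m}\sum_{k=0}^{n} \omega^{-ka}\prod_{j=1}^{n}\parenv*{1+\omega^{kj}} .
\end{equation*}
The $k=0$ term equals $2^n/(n+1)$, so both bounds follow once we show
\begin{equation*}
\Big|\sum_{k=1}^{n} \omega^{-ka}P_k\Big| < \max\mathset*{n\,2^{n-\frac{2(n+1)}{3}},\;2^{n-\frac{2n-1}{3}}},
\qquad P_k \triangleq \prod_{j=1}^{n}(1+\omega^{kj}),
\end{equation*}
with the lower bound needing the second, smaller-looking quantity. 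This suggests treating the $P_k$ more carefully than with a crude uniform bound.

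\textbf{Step 1: evaluate $|P_k|$ exactly.} Fix $k\in[1,n]$ and let $d=\gcd(k,m)$, $m'=m/d$. As $j$ ranges over $[1,n]$, the value $kj \bmod m$ runs through each multiple of $d$ exactly $d$ times, except that the residue $0$ is hit only $d-1$ times, since $j=m$ is excluded.

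If $m'$ is even, some factor equals $1+\omega^{m/2}=0$, so $P_k=0$. If $m'$ is odd, use $\prod_{r=0}^{m'-1}(1+\zeta^r)=2$ for $\zeta$ a primitive $m'$-th root of unity. This gives
\begin{equation*}
P_k = 2^{d}\cdot 2^{-1}=2^{d-1},
\end{equation*}
because the factors $1+\zeta^r$ for $r\ne 0$ have product $1$, and the $r=0$ factor $2$ appears $d-1$ times. Hence $P_k$ is either $0$ or the positive real number $2^{d-1}$, where $d$ is a proper divisor of $m$ with $m/d$ odd.

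\textbf{Step 2: bound the sum.} Every nonzero term satisfies $d\le m/3$, so $|P_k|\le 2^{(n+1)/3-1}$. At most $n$ values of $k$ contribute, giving
\begin{equation*}
\Big|\sum_{k\ge1}\omega^{-ka}P_k\Big| \le n\,2^{\frac{n+1}{3}-1}.
\end{equation*}
Dividing by $m$ and comparing with $2^n\cdot n\,2^{-2(n+1)/3}/m = n\,2^{\frac{n-2}{3}}/m$, we see that $n\,2^{\frac{n-2}{3}}$ exceeds $n\,2^{\frac{n+1}{3}-1}=n\,2^{\frac{n-2}{3}}$ only by equality. To get strict inequality, I will note that the value $d=m/3$ occurs for at most $2$ values of $k$ (namely $k=m/3$ and $k=2m/3$). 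All other contributing $k$ have $d\le m/5$, which gives strict slack.

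\textbf{Step 3: the lower bound (main obstacle).} Since the $P_k$ are real and nonnegative but the phases $\omega^{-ka}$ vary, the true bound needs the sum to be at least $-2^{(n+1)/3}$, roughly. The terms with the largest $d=m/3$ come in the conjugate pair $k=m/3,\,2m/3$ and sum to $2\cdot 2^{m/3-1}\cos(2\pi a/3)\ge -2^{m/3-1}$. The remaining terms, with $d\le m/5$, total at most $n\,2^{m/5-1}$, which for large $n$ is much smaller than the slack. This should yield $|\vt{a}{n}|>\frac{2^n}{m}(1-2^{-(2n-1)/3})$, since $2^{m/3-1}/2^n=2^{-(2n-1)/3}-$ slightly. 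The delicate part is ensuring that the tail $n\,2^{m/5-1}$ is absorbed strictly for all $n\ge2$. I expect this to require a direct check of small $n$, and a cleaner grouping of the remaining terms by divisor $d$, bounding $\sum_{d\mid m,\,d\le m/5}\varphi(m/d)2^{d-1}$ by a geometric series.
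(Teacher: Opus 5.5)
\textbf{Steps 1--2 (upper bound).} These are sound. Your evaluation $P_k\in\{0,2^{d-1}\}$ is correct. Grouping the $k$ by $d=\gcd(k,m)$ reproduces exactly the totient and M\"obius formulas that the paper imports from Sloane. Your triangle-inequality bound gives the same estimate as the paper's intermediate inequality $|\vt{0}{n}|\le \frac{2^n}{n+1}+\frac{n\,2^{(n-2)/3}}{n+1}$. It also handles every $a$ directly, so for this half you do not need the cited comparison $|\vt{a}{n}|\le|\vt{0}{n}|$.

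Your strictness argument needs an index $k\notin\{m/3,2m/3\}$, which exists only for $n\ge 3$. At $n=2$ no argument can work: $\vt{0}{2}=\{00,11\}$ has size $2=\frac{4}{3}(1+2\cdot 2^{-2})$. So the strict upper bound in the statement is false there, and only ``$\le$'' holds.

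\textbf{Step 3 (lower bound) is the genuine gap.} The lower bound is not proved. Moreover, the completion you sketch fails for small $n$, even with exact weights $\varphi(m/d)$ rather than a geometric-series estimate. That sketch keeps the conjugate pair at $d=m/3$ and bounds every other term by $|P_k|$.

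For $n=14$, $m=15$:
\begin{itemize}
\item the pair contributes at least $-16$;
\item the four $k$ with $\gcd(k,15)=3$ have $P_k=4$;
\item the eight $k$ coprime to $15$ have $P_k=1$.
\end{itemize}
So you only get $\sum_{k\ge1}\omega^{-ka}P_k\ge -40$, whereas you need $>-2^{m/3}=-32$. Likewise, for $n=4$ you only get $|\vt{a}{4}|\ge 12/5$, which is below $\frac{16}{5}(1-2^{-7/3})\approx 2.57$. The same failure occurs for $n=6$ and $n=8$.

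The missing idea is that the phases within a gcd-class are not arbitrary. The sum $\sum_{\gcd(k,m)=d}\omega^{-ka}$ is the Ramanujan sum $c_{m/d}(a)$, an integer that equals $\mu(m/d)\in\{-1,0,1\}$ when $\gcd(a,m)=1$. The paper exploits exactly this. It first reduces to $a=1$ via the comparison $|\vt{1}{n}|\le|\vt{a}{n}|$, which it takes from Sloane. The error term then becomes a signed sum of \emph{distinct} powers $2^{m/e}$ over odd $e\ge 3$, bounded by the geometric series $2+4+\dots+2^{\floor{m/3}}<2^{m/3+1}$.

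If you want to stay self-contained, you must do one of two things:
\begin{itemize}
\item control $c_{m/d}(a)$ for general $a$; this needs care, since it can be as negative as $-\varphi(m/d)/2$, e.g.\ $c_{15}(5)=-4$;
\item prove the crude bound for large $m$ and evaluate the remaining cases exactly. For instance, $\sum_{d\le m/5}\varphi(m/d)2^{d-1}<m\,2^{m/5}\le 2^{m/3-1}$ holds for $m\ge 50$.
\end{itemize}
Neither is carried out in the proposal. Also, your ``direct check of small $n$'' must be an exact evaluation, not the same triangle-inequality bound.
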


To formally introduce the average list sizes of $t$-insertion and $t$-deletion list decoders, we first define the following auxiliary quantities. For any code $\cC \subseteq \Ftwo^n$, any integer $t \geq 1$, and any $\bfu \in \Ftwo^{n+t}$ and $\bfv \in \Ftwo^{n-t}$, let
\begin{align*}
    \tauD^{(\cC)}(t,\bfu) &\triangleq |\balldel{\bfu}{t} \cap \cC| \\
    \tauI^{(\cC)}(t,\bfv) &\triangleq |\ballins{\bfv}{t} \cap \cC|.
\end{align*}

In words, $\tauD^{(\cC)}(t,\bfu)$ ($\tauI^{(\cC)}(t,\bfv)$) represents the number of~codewords of $\cC$ returned by a $t$-insertion ($t$-deletion) list decoder upon receiving some $\bfu \in \Ftwo^{n+t}$ ($\bfv \in \Ftwo^{n-t}$).

\begin{definition} \label{eq::def-li}
\label{eq::def-ld}
    The \emph{average list size of a $t$-insertion list decoder} and \emph{$t$-deletion list decoder} of a code $\cC \subseteq \Ftwo^n$ are given by
    \begin{align}
    \listins{t}{\cC} &\triangleq \frac{1}{|\cC|} \sum_{\bfc \in \cC} \frac{1}{|\ballins{\bfc}{t}|} \sum_{\bfu \in \ballins{\bfc}{t}} |\balldel{\bfu}{t} \cap \cC \setminus \{\bfc\}| \nonumber \\
    &= \frac{1}{|\cC|} \sum_{\bfc \in \cC} \frac{1}{|\ballins{\bfc}{t}|} \sum_{\bfu \in \ballins{\bfc}{t}}(\tauD^{(\cC)}(t,\bfu) - 1). \nonumber
\end{align}
and
\begin{align}
        \listdel{t}{\cC} &\triangleq \frac{1}{|\cC|} \sum_{\bfc \in \cC} \frac{1}{|\balldel{\bfc}{t}|} \sum_{\bfv \in \balldel{\bfc}{t}} |\ballins{\bfv}{t} \cap \cC \setminus \{\bfc\}| \nonumber \\
        &= \frac{1}{|\cC|} \sum_{\bfc \in \cC} \frac{1}{|\balldel{\bfc}{t}|} \sum_{\bfv \in \balldel{\bfc}{t}} (\tauI^{(\cC)}(t,\bfv) - 1).  \nonumber
    \end{align} 
\end{definition}

Essentially, $\listins{t}{\cC}$ and $\listdel{t}{\cC}$ indicate the average number of codewords, excluding the transmitted one, returned by a $t$-insertion and a $t$-deletion list decoder of $\cC$, respectively, where the transmitted codeword is drawn uniformly from $\cC$ and the received sequence is drawn uniformly from its $t$-insertion (resp.\ $t$-deletion) sphere. The remainder of this manuscript focuses on the case of $t=2$, investigating $\listins{t}{\cC}$ and $\listdel{t}{\cC}$ for the case when $\cC$ is a VT code. 

\section{List-decoding from Two Insertions}

In this section, we establish an upper bound on $\listins{2}{\cC}$, i.e., the average number of words returned by a $t=2$-insertion list decoder of $\cC$, where $\cC \subseteq \Ftwo^n$ is any VT code. To proceed along these lines, we first present some useful observations on single-deletion-correcting codes, in connection with the term $\tauD^{(\cC)}(2,\bfy)$ for some $\bfy \in \Ftwo^{n+2}$.

\begin{claim} \label{cl::sum-runs-vt-code}
    Any single-deletion-correcting code $\cC \subset \Ftwo^n$ satisfies
    \begin{equation}
        \sum_{\bfc \in \cC} \rho(\bfc) \leq 2^{n-1}. \label{eq::run-sum-cw}
    \end{equation} 
    Moreover, if $\cC$ is a VT code, then (\ref{eq::run-sum-cw}) holds with equality.
\end{claim}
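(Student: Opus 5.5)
The plan is a counting argument on the deletion spheres, built on two facts. First, for binary $\bfx\in\Ftwo^n$, $|\balldel{\bfx}{1}| = \rho(\bfx)$, since deleting any symbol within a run gives the same word and deletions from distinct runs give distinct words. Second, a code $\cC\subseteq\Ftwo^n$ corrects a single deletion exactly when the spheres $\balldel{\bfc}{1}$, $\bfc\in\cC$, are pairwise disjoint. Together these give
\begin{equation*}
\sum_{\bfc\in\cC}\rho(\bfc)=\sum_{\bfc\in\cC}|\balldel{\bfc}{1}| = \Bigl|\bigcup_{\bfc\in\cC}\balldel{\bfc}{1}\Bigr| \le |\Ftwo^{n-1}| = 2^{n-1},
\end{equation*}
which is (\ref{eq::run-sum-cw}).

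\textbf{Step 1.} Justify $|\balldel{\bfx}{1}|=\rho(\bfx)$. Write $\bfx$ as a concatenation of runs $R_1R_2\cdots R_{\rho}$, so that deleting one symbol of $R_j$ gives the word $\bfx^{(j)}$. If $j<k$, then $\bfx^{(j)}$ and $\bfx^{(k)}$ agree before run $j$. They first differ inside the region spanning runs $j$ to $k$: the run lengths of $\bfx^{(j)}$ read $(\ldots,|R_j|-1,|R_{j+1}|,\ldots)$, while those of $\bfx^{(k)}$ read $(\ldots,|R_j|,\ldots)$. When $|R_j|=1$, the deletion merges runs $R_{j-1}$ and $R_{j+1}$, but a short position-wise comparison still separates the two words. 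This is the classical fact, and I would cite it or verify it briefly.

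\textbf{Step 2.} Use disjointness, which is precisely the definition of single-deletion correction: if $\bfv\in\balldel{\bfc}{1}\cap\balldel{\bfc'}{1}$ with $\bfc\ne\bfc'$, the decoder could not resolve $\bfv$. The displayed inequality follows.

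\textbf{Step 3 (equality for VT codes).} Here I invoke the perfectness property recalled in the Preliminaries: for every $a$, each $\bfy\in\Ftwo^{n-1}$ lies in $\balldel{\bfc}{1}$ for some $\bfc\in\vt{a}{n}$. The union of the disjoint spheres is therefore all of $\Ftwo^{n-1}$, and the inequality becomes an equality.

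No step is a real obstacle. The only point needing care is the run-counting identity in Step 1, specifically the case of runs of length one, which I would handle by citing Levenshtein.
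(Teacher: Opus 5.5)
Your proof is correct and is the paper's argument: the paper writes $\sum_{\bfc\in\cC}|\balldel{\bfc}{1}|=\sum_{\bfy\in\Ftwo^{n-1}}|\ballins{\bfy}{1}\cap\cC|$, bounds each term by $1$, and gets equality for VT codes from perfectness, which is your disjoint-union argument written as a double count. In Step 1 you can avoid the run-merging case: take $\bfx^{(j)}$ to be $\bfx$ with the last symbol of $R_j$ deleted, and compare $\bfx^{(j)}$ and $\bfx^{(k)}$ ($j<k$) at the position of that last symbol, where they differ for every run length.
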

\begin{IEEEproof}
    Note that for every $\bfx$, we have $|\balldel{\bfx}{1}| = \rho(\bfx)$. This is because for any $a \geq 1$, deleting a single symbol from a run $\bfzero^{a}$ (or $\bfone^{a}$) produces $\bfzero^{a-1}$ (respectively, $\bfone^{a-1}$), regardless of~which particular symbol was deleted.
    Therefore,
    \begin{align*}
        \sum_{\bfc \in \cC} \rho(\bfc) = \sum_{\bfc \in \cC} |\balldel{\bfc}{1}|
         = \sum_{\bfc \in \cC} \sum_{\bfy \in \balldel{\bfc}{1}} 1
        = \! \sum_{\bfy \in \Ftwo^{n-1}} \!\!\! |\ballins{\bfy}{1} \cap \cC|.
    \end{align*}
    Since $\cC$ corrects a single deletion, it holds that for each $\bfy \in \Ftwo^{n-1}$, $|\ballins{\bfy}{1} \cap \cC| \leq 1$ and~\eqref{eq::run-sum-cw} follows. 
    The `Moreover' part follows since VT codes are perfect single-deletion codes~\cite{sloaneSingleDeletionCorrectingCodes2002} which implies that for all $\bfy \in \Ftwo^{n-1}$, it holds that ${|\ballins{\bfy}{1} \cap \cC| = 1}$.
\end{IEEEproof}

For any $\bfy \in \Ftwo^{n+2}$, let $\balldelp{\bfy}{2} \subseteq \Ftwo^{n}$ denote the set of words obtained by deleting two symbols in the same run of $\bfy$ (note that we cannot delete a run of length $1$). 
By definition, it follows that $\balldelp{\bfy}{2} \subset \balldel{\bfy}{2}$ and that $|\balldelp{\bfy}{2}| \leq \rho(\bfy)$. Analogously, for any $\bfx \in \Ftwo^n$, let $\ballinsp{\bfx}{2}$ be the set of words obtained by inserting two identical symbols consecutively anywhere in $\bfx$. Of course, it holds that $|\ballinsp{\bfx}{2}| = I_1(n) = n+2$. These observations allow us to infer that for any code $\cC$ that corrects a single insertion or a single deletion, it holds that
\begin{align}
    \sum_{\bfy \in \Ftwo^{n+2}} |\balldelp{\bfy}{2} \cap \cC| &= \sum_{\bfy \in \Ftwo^{n+2}} \sum_{\bfc \in \cC} \indicator(\bfc \in \balldelp{\bfy}{2}) \nonumber \\    
    &= \sum_{\bfc \in \cC} \sum_{\bfy \in \Ftwo^{n+2}}  \indicator(\bfy \in \ballinsp{\bfc}{2}) \nonumber \\
    &= \sum_{\bfc \in \cC} |\ballinsp{\bfc}{2}| = (n+2)|\cC|. \label{eq::d2p-bound}
\end{align}
\begin{figure*}[t]
	\begin{center}
		\scalebox{0.7}{
			\begin{tikzpicture}[every node/.style={draw=none, fill=none, font=\large}]
				\def\hsep{4.5cm}
				
				\node[align=center] (ya) {$\bfy=010\underline{111}01$};
				
				\node[align=center, xshift=\hsep] (vadel) {$0101101$};
				
                \node[align=center, xshift=2*\hsep] (va) {$\bfv_1=010101$};

                \node[xshift=3*\hsep] (yb) {$\bfy=0\underline{1}0111\underline{0}1$};
				
				\node[align=center, xshift=4*\hsep, yshift=1.2cm] (vbdela) {$0101111$};
				
				\node[align=center, xshift=4*\hsep, yshift=-1.2cm] (vbdelb) {$0011101$};

                \node[align=center, xshift=5*\hsep] (vb) {$\bfv_2=001111$};
                
                \node[align=center, xshift=\hsep, yshift=2.05cm] (inta) {$\balldel{\bfy}{1} \cap \ballins{\bfv_1}{1}$};
                \node[align=center, xshift=4*\hsep, yshift=2.05cm] (intb) {$\balldel{\bfy}{1} \cap \ballins{\bfv_2}{1}$};

                \draw[-{Stealth[length=5pt, width=3pt]}] (ya.east) to node[pos=0.5, above=2pt, align=center]{$1$ Deletion} (vadel.west);

                \draw[-{Stealth[length=5pt, width=3pt]}] (vadel.east) to node[pos=0.5, above=2pt, align=center]{$1$ Deletion} (va.west);
                
				\draw[-{Stealth[length=5pt, width=3pt]}, bend left=14] (yb.east) to node[pos=0.5, sloped, above=2pt, align=center, font=\small]{$1$ Deletion} (vbdela.west);

                \draw[-{Stealth[length=5pt, width=3pt]}, bend left=-14] (yb.east) to node[pos=0.5, sloped, above=2pt, align=center, font=\small]{$1$ Deletion} (vbdelb.west);

                \draw[-{Stealth[length=5pt, width=3pt]}, bend left=14] (vbdela.east) to node[pos=0.5, sloped, above=2pt, align=center, font=\small]{$1$ Deletion} (vb.west);

                \draw[-{Stealth[length=5pt, width=3pt]}, bend left=-14] (vbdelb.east) to node[pos=0.5, sloped, above=2pt, align=center, font=\small]{$1$ Deletion} (vb.west);
				
				
				
			\end{tikzpicture}
            }
	\end{center}
    \vspace{-3.3mm}
	\caption{Demonstration of different deletion patterns affecting $\bfy$, leading to different values of $|\balldel{\bfy}{1} \cap \ballins{\bfv}{1}|$ in \Cref{lem::taud_upper_bound}\label{fig::intersection-sizes}. In this example, we consider $\bfy = 01011101$, and two vectors in its two-deletion sphere $\balldel{\bfy}{2}$, namely $\bfv_1=010101$ and $\bfv_2=001111$. We see that $|\balldel{\bfy}{1} \cap \ballins{\bfv_1}{1}| =1$, which is a consequence of the fact that both deletion events affect the same run, i.e., $\bfv_1 \in \balldelp{\bfy}{2}$. On the contrary, $\bfv_2$ results from a deletion in two distinct runs of~$\bfy$, i.e., $\bfv_2 \not\in \balldelp{\bfy}{2}$, leading to $|\balldel{\bfy}{1} \cap \ballins{\bfv_2}{1}| = 2$.}
    \vspace{-3.8mm}
\end{figure*}

\begin{lemma}
    For any $\bfy \in \Ftwo^{n+2}$ and a single-deletion-correcting code $\cC$ of length $n$, it holds that 
    \begin{equation*}
        \tauD^{(\cC)}(2,\bfy)\leq \frac{\rho(\bfy)+ |\balldelp{\bfy}{2} \cap \cC|}{2} \leq \rho(\bfy).\end{equation*}\label{lem::taud_upper_bound}
\end{lemma}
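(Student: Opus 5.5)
A natural plan is a double-counting argument on the intermediate layer $\balldel{\bfy}{1}\subseteq\Ftwo^{n+1}$, which has exactly $\rho(\bfy)$ elements (as noted in the proof of \Cref{cl::sum-runs-vt-code}). Every codeword $\bfc\in\balldel{\bfy}{2}\cap\cC$ is reached from $\bfy$ through at least one $\bfz\in\balldel{\bfy}{1}$ with $\bfc\in\balldel{\bfz}{1}$. We consider the incidences
\begin{equation*}
N \triangleq \sum_{\bfz\in\balldel{\bfy}{1}} |\balldel{\bfz}{1}\cap\cC| = \sum_{\bfc\in\balldel{\bfy}{2}\cap\cC} |\balldel{\bfy}{1}\cap\ballins{\bfc}{1}|.
\end{equation*}

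For the upper bound on $N$, fix $\bfz\in\Ftwo^{n+1}$. Since $\cC$ corrects a single deletion, Levenshtein's equivalence says it also corrects a single insertion. Hence any two distinct codewords have disjoint single-insertion balls, and so $|\balldel{\bfz}{1}\cap\cC|\le 1$. Summing over $\bfz$ gives $N\le\rho(\bfy)$.

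For the lower bound on $N$, we show that every $\bfc\in\balldel{\bfy}{2}\setminus\balldelp{\bfy}{2}$ satisfies $|\balldel{\bfy}{1}\cap\ballins{\bfc}{1}|\ge 2$, which is the phenomenon in \Cref{fig::intersection-sizes}; every $\bfc$ in the intersection trivially contributes at least $1$. Given this, $N\ge 2\tauD^{(\cC)}(2,\bfy)-|\balldelp{\bfy}{2}\cap\cC|$. Combining with $N\le\rho(\bfy)$ yields the first inequality. The second inequality then follows immediately from $|\balldelp{\bfy}{2}\cap\cC|\le|\balldelp{\bfy}{2}|\le\rho(\bfy)$.

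The main obstacle is the lower bound for $\bfc\notin\balldelp{\bfy}{2}$. Take any deletion pattern producing $\bfc$ from $\bfy$. If the two deleted positions lay in the same run, then $\bfc\in\balldelp{\bfy}{2}$, so they lie in distinct runs $R_i$ and $R_j$ with $i<j$. Deleting first from $R_i$ gives $\bfz_1$, and deleting first from $R_j$ gives $\bfz_2$; both satisfy $\bfz_k\in\balldel{\bfy}{1}$ and $\bfc\in\balldel{\bfz_k}{1}$. It remains to check $\bfz_1\neq\bfz_2$. Both have length $n+1$, and $\bfz_1$ equals $\bfy$ with one symbol removed from run $R_i$, while $\bfz_2$ equals $\bfy$ with one symbol removed from run $R_j$. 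Deleting from distinct runs of the same word yields distinct words, because the elements of $\balldel{\bfy}{1}$ are in bijection with the runs of $\bfy$; this is exactly the fact that $|\balldel{\bfy}{1}|=\rho(\bfy)$. Hence $\bfz_1\neq\bfz_2$. One subtlety is that $\bfc$ might also arise via some other same-run pattern. That case is harmless: either $\bfc\in\balldelp{\bfy}{2}$ and it needs only a contribution of $1$, or every pattern producing it uses distinct runs and the argument above applies.
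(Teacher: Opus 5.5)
Your proof is correct and is essentially the paper's argument. Both count paths from $\bfy$ to codewords through the layer $\balldel{\bfy}{1}$ of size $\rho(\bfy)$, use single-deletion correction to allow at most one codeword below each intermediate word, and use that a codeword outside $\balldelp{\bfy}{2}$ is reached through at least two distinct intermediates. Your incidence-count formulation is slightly cleaner: it needs only the inequalities $|\balldel{\bfy}{1}\cap\ballins{\bfc}{1}|\geq 1$ and $\geq 2$, and you actually prove the latter. The paper instead asserts these multiplicities are exactly $1$ and $2$; the latter can fail (e.g.\ $\bfy=0101$, $\bfc=01$ gives $4$), though only the inequality is needed there.
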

\begin{IEEEproof}
    Observe that $\tauD^{(\cC)}(2,\bfy)$ can be expressed as
    \begin{align}
        \tauD^{(\cC)}(2,\bfy) &= |\balldel{\bfy}{2} \cap \cC| \nonumber \\
        &= \sum_{\bfu \in \balldel{\bfy}{1}} \sum_{\bfv \in \balldel{\bfu}{1}} \frac{\indicator(\bfv \in \cC)}{|\balldel{\bfy}{1} \cap \ballins{\bfv}{1}|}. \label{eq::taud-bound}
    \end{align}
    Indeed, note that there can be more than a single deletion pattern, of the two deletions, to get from $\bfy$ to $\bfv$. 
    The quantity $|\balldel{\bfy}{1} \cap \ballins{\bfv}{1}|$ exactly counts those patterns (see Fig.~\ref{fig::intersection-sizes} for a visualization).
    More specifically, for any $\bfv \in \cC \cap \balldelp{\bfy}{2}$, we have $|\balldel{\bfy}{1} \cap \ballins{\bfv}{1}| = 1$, i.e., $\balldel{\bfy}{1} \cap \ballins{\bfv}{1}$ contains a single vector that results from causing a single deletion in the affected run of $\bfy$. On the contrary, when $\bfv \in \cC \cap \balldel{\bfy}{2} \setminus \balldelp{\bfy}{2}$, we have $|\balldel{\bfy}{1} \cap \ballins{\bfv}{1}| = 2$. Thus, each summand in (\ref{eq::taud-bound}) splits into
    \begin{align*}
        \frac{\indicator(\bfv \in \cC)}{|\balldel{\bfy}{1} \cap \ballins{\bfv}{1}|} &= \frac{\indicator(\bfv \in \cC \setminus \balldelp{\bfy}{2})}{2} + \indicator(\bfv \in \cC \cap \balldelp{\bfy}{2}) \\
        &= \frac{\indicator(\bfv \in \cC)+\indicator(\bfv \in \cC \cap \balldelp{\bfy}{2})}{2}.
    \end{align*}   
    
    This allows us to rewrite (\ref{eq::taud-bound}) as
    \begin{align*}
        &\frac{1}{2}\sum_{\bfu \in \balldel{\bfy}{1}} \sum_{\bfv \in \balldel{\bfu}{1}} \left( \indicator(\bfv \in \cC ) + \indicator(\bfv \in \cC \cap \balldelp{\bfy}{2})\right)
    \end{align*}
    Now, recall that $\cC$ is a single-deletion code, and thus, $\sum_{\bfv \in \balldel{\bfu}{1}} \indicator(\bfv \in \cC) \leq 1$ for any $\bfu \in \balldel{\bfy}{1} \subset \Ftwo^{n+1}$. Also recall that 
    $|\balldel{\bfy}{1} \cap \ballins{\bfv}{1}| = 1$ for $\bfv \in \cC \cap \balldelp{\bfy}{2}$, so 
    $\sum_{\bfu \in \balldel{\bfy}{1}} \sum_{\bfv \in \balldel{\bfu}{1}} \indicator(\bfv \in \cC \cap \balldelp{\bfy}{2}) = |\cC \cap \balldelp{\bfy}{2}|$. Thus, the first inequality in the lemma statement follows. 
     
     Furthermore, since $|\cC \cap \balldelp{\bfy}{2}| \leq |\balldelp{\bfy}{2}| \leq \rho(\bfy)$, the second inequality also readily follows.
\end{IEEEproof}

We now can deduce the following upper bound on $\listins{2}{\cC}$, where $\cC$ is any length-$n$ VT code.
\begin{theorem} \label{th::upper-bound-2-ins}
    For any $n \geq 2$, let $\cC$ be a VT code of length $n$. Then, it holds that 
    \begin{align*}
        \listins{2}{\cC} &< \frac{n+12}{4}.
    \end{align*}
\end{theorem}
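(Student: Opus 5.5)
The plan is to turn $\listins{2}{\cC}$ into a second moment of $\tauD^{(\cC)}(2,\cdot)$ over all of $\Ftwo^{n+2}$. I would then bound one factor of $\tauD$ by \Cref{lem::taud_upper_bound} and control the resulting run-weighted sums through the codewords, using \Cref{cl::sum-runs-vt-code}.

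\textbf{Step 1 (double counting).} Since $|\ballins{\bfc}{2}| = I_2(n)$ for every $\bfc$, I will swap the order of summation. The number of $\bfc\in\cC$ with $\bfy\in\ballins{\bfc}{2}$ is exactly $\tauD^{(\cC)}(2,\bfy)$, which gives
\begin{equation*}
\listins{2}{\cC} = \frac{1}{|\cC|\, I_2(n)} \sum_{\bfy \in \Ftwo^{n+2}} \tauD^{(\cC)}(2,\bfy)^2 \;-\; 1 .
\end{equation*}

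\textbf{Step 2 (apply the lemma to one factor).} Writing $\tauD=\tauD^{(\cC)}(2,\bfy)$ and $d'(\bfy)=|\balldelp{\bfy}{2}\cap\cC|$, \Cref{lem::taud_upper_bound} gives
\begin{equation*}
\tauD^2 \le \tfrac12\,\tauD\,\rho(\bfy) + \tfrac12\,\tauD\, d'(\bfy) \le \tfrac12\,\tauD\,\rho(\bfy) + \tfrac12\,\rho(\bfy)\, d'(\bfy).
\end{equation*}
In the second term I used $\tauD\le\rho(\bfy)$, the second inequality of the lemma.

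\textbf{Step 3 (re-index through codewords).} For the first term,
\begin{equation*}
\sum_{\bfy}\tauD\,\rho(\bfy)=\sum_{\bfc\in\cC}\sum_{\bfy\in\ballins{\bfc}{2}}\rho(\bfy).
\end{equation*}
Two insertions raise the number of runs by at most $4$, so $\rho(\bfy)\le\rho(\bfc)+4$. Combined with \Cref{cl::sum-runs-vt-code} (equality for VT codes), this gives at most $I_2(n)\,(2^{n-1}+4|\cC|)$.

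For the second term, I re-index exactly as in \eqref{eq::d2p-bound}:
\begin{equation*}
\sum_{\bfy}\rho(\bfy)\,d'(\bfy)=\sum_{\bfc\in\cC}\sum_{\bfy\in\ballinsp{\bfc}{2}}\rho(\bfy).
\end{equation*}
Inserting a double symbol adds at most $2$ runs, so $\rho(\bfy)\le\rho(\bfc)+2$, and the term is at most $(n+2)(2^{n-1}+2|\cC|)$.

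Dividing by $|\cC|\,I_2(n)$ and subtracting $1$ then yields
\begin{equation*}
\listins{2}{\cC}\le \frac{2^{n}}{4|\cC|} + 1 + \frac{(n+2)}{2 I_2(n)}\Bigl(\frac{2^{n-1}}{|\cC|}+2\Bigr).
\end{equation*}

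\textbf{Step 4 (numerics; the main obstacle).} The remaining work is the final numerical step. I would use the lower bound of \Cref{lem::vt-bounds}, so that $2^n/|\cC| < (n+1)/(1-2^{-(2n-1)/3})$. The first term is then about $(n+1)/4$ plus a tiny correction. Since $I_2(n)=1+(n+2)+\binom{n+2}{2}$, the last term is $O(1)$, roughly $1+O(1/n)$.

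The target $(n+12)/4$ therefore leaves slack of about $11/4-1$, which is roughly $1.75$, for these error terms. I expect the delicate part to be verifying this uniformly for small $n$ (say $n=2,3,4$), where $2^{-(2n-1)/3}$ is not negligible. There I would either check directly or bound $(n+2)(n+1)/I_2(n)\le 2$ crudely. If the slack proves insufficient, I would sharpen the run-increase estimates: inserting two symbols increases the run count by $4$ only in restricted positions, and averaging over the insertion sphere gives an expected increase near $2$ rather than $4$.
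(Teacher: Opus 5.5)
Your proposal is correct and follows essentially the paper's route: double counting to get $\sum_{\bfy}\tauD^{(\cC)}(2,\bfy)^2$, then \Cref{lem::taud_upper_bound}, then the run-increase bound combined with \Cref{cl::sum-runs-vt-code}. The one difference is in the $|\balldelp{\bfy}{2}\cap\cC|$ term. The paper bounds $\tauD^{(\cC)}(2,\bfy)\le n+2$ and applies \eqref{eq::d2p-bound}, whereas you keep $\rho(\bfy)$ and re-index through $\ballinsp{\bfc}{2}$. This makes that term about $\tfrac12$ rather than about $1$ (not $1+O(1/n)$ as you estimated), so your bound is asymptotically near $(n+7)/4$ instead of the paper's $(n+9)/4$.

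The numerics do close, but your crude estimate $(n+1)(n+2)/I_2(n)\le 2$ is not enough at $n=2$: it gives about $3.86>3.5$. There you need the direct evaluation $\tfrac32+1+\tfrac{6}{11}+\tfrac{4}{11}\approx 3.41<3.5$. For $n\ge 3$ the crude estimate suffices.
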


\begin{IEEEproof}
By applying  \Cref{lem::taud_upper_bound} to \Cref{eq::def-li}, we get
\begin{align}
     \listins{2}{\cC} + 1 &= \frac{1}{|\cC|} \sum_{\bfc \in \cC} \frac{1}{|\ballins{\bfc}{2}|} \sum_{\bfy \in \ballins{\bfc}{2}} \tauD^{(\cC)}(2,\bfy) \nonumber \\
    &= \frac{1}{2|\cC|I_2(n)} \sum_{\bfc \in \cC}  \sum_{\bfy \in \ballins{\bfc}{2}} \!\!\!  \left( \rho(\bfy) 
    + |\balldelp{\bfy}{2} \cap \cC| \right). \label{eq::list-ins-two-terms}
\end{align}

Since inserting two symbols may increase the number of~runs by at most $4$, we have
\begin{align}
    \sum_{\bfc \in \cC}  \sum_{\bfy \in \ballins{\bfc}{2}} \rho(\bfy) &\leq \sum_{\bfc \in \cC}  \sum_{\bfy \in \ballins{\bfc}{2}} (\rho(\bfc) + 4) \nonumber \\
    &\leq I_2(n)  \parenv*{2^{n-1} +  4|\cC|}, \label{eq::sum1}
\end{align}
where the final equality follows from \Cref{cl::sum-runs-vt-code}.

We now proceed to bound the second term in (\ref{eq::list-ins-two-terms}).
\begin{align}
    &\sum_{\bfc \in \cC}  \sum_{\bfy \in \ballins{\bfc}{2}} |\balldelp{\bfy}{2} \cap \cC| \nonumber \\
    &= \sum_{\bfy \in \Ftwo^{n+2}} \sum_{\bfc \in \cC} \indicator(\bfy \in \ballins{\bfc}{2}) \cdot |\balldelp{\bfy}{2} \cap \cC| \nonumber \\
    &= \sum_{\bfy \in \Ftwo^{n+2}} |\balldelp{\bfy}{2} \cap \cC|  \cdot \sum_{\bfc \in \cC} \indicator(\bfc \in \balldel{\bfy}{2}) \nonumber \\
    &= \sum_{\bfy \in \Ftwo^{n+2}} |\balldelp{\bfy}{2} \cap \cC|  \cdot \tauD^{(\cC)}(2,\bfy). \nonumber
\end{align}

Since by \Cref{lem::taud_upper_bound}, we have $\tauD^{(\cC)}(2,\bfy) \leq \rho(\bfy) \leq n+2$, the above expression can be simplified into
\begin{align}
    \sum_{\bfc \in \cC}  \sum_{\bfy \in \ballins{\bfc}{2}} |\balldelp{\bfy}{2} \cap \cC| &\leq (n+2) \sum_{\bfy \in \Ftwo^{n+2}} |\balldelp{\bfy}{2} \cap \cC| \nonumber \\
    &= (n+2)^2 |\cC|, \label{eq::sum2}
\end{align}
where, the equality follows from (\ref{eq::d2p-bound}). Finally, we leverage (\ref{eq::sum1}) and (\ref{eq::sum2}) to bound $\listins{2}{\cC}$
\begin{align}
    \listins{2}{\cC} &< \frac{2^{n-2}}{|\cC|} + \frac{(n+2)^2}{2I_2(n)} + 1 \nonumber \\
    &< \frac{n+1}{4 \parenv*{1-2^{-\frac{2n-1}{3}}}} + 2, \label{eq::list-ins}
\end{align}
where the second inequality follows from \Cref{prop::vta-lower bound} and $I_2(n) = \frac{n^2+5n+8}{2} > \frac{(n+3)(n+2)}{2}$.

Observe that since $n \geq 2$, we have $0 < 2^{-\frac{2n-1}{3}} \leq \frac{1}{2}$, implying that $(1- 2^{-\frac{2n-1}{3}})^{-1} \leq 1 + 2 \cdot 2^{-\frac{2n-1}{3}}$. We apply this to (\ref{eq::list-ins}) below.
\begin{align*}
    \listins{2}{\cC} &<  \frac{n+1}{4}(1+2 \cdot 2^{-\frac{2n-1}{3}}) + 2 \\
    &= \frac{n+9}{4} + \frac{n+1}{2} \cdot 2^{-\frac{2n-1}{3}}.
\end{align*}

Since the latter term equals $\frac{3}{4}$ at $n=2$ and further decreases as $n$ increases, the statement of the theorem follows.
\end{IEEEproof}

\section{List-decoding from Two Deletions}

We now turn to examine $\listdel{t = 2}{\vt{a}{n}}$ for any $a \in [0,n]$. In particular, we aim to show that list-decoding from two deletions with a single-deletion (or insertion)-correcting code is more challenging than list-decoding from two insertions, by establishing a lower bound on $\listdel{t = 2}{\vt{a}{n}}$ that significantly exceeds the upper bound on $\listdel{t = 2}{\cC}$ presented in \Cref{th::upper-bound-2-ins}, where $\cC$ is any single-deletion-correcting code. We proceed along these lines by first proving a simpler lower bound on $\listdel{t = 2}{\cC}$.
\begin{lemma}
    For any $n \geq 3$ and any arbitrary $\cC \subseteq \Ftwo^n$, it holds that
    \begin{align*}
        \listdel{2}{\cC} &\geq \frac{2}{|\cC|}\sum_{\bfy \in \Ftwo^{n-2}} \frac{1}{\binom{\rho(\bfy) + 5}{2}} \binom{\tauI^{(\cC)}(2,\bfy) }{2}.
    \end{align*}
    \label{lem::1-del-code-ld}
\end{lemma}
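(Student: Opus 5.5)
The plan is to start from the definition of $\listdel{2}{\cC}$ in \Cref{eq::def-ld}, swap the order of summation so that the outer sum runs over received words $\bfy \in \Ftwo^{n-2}$, and then replace the codeword-dependent normalization $|\balldel{\bfc}{2}|$ by a quantity depending only on $\bfy$. Concretely, since $\bfy \in \balldel{\bfc}{2}$ if and only if $\bfc \in \ballins{\bfy}{2}$, we have
\begin{align*}
\listdel{2}{\cC} = \frac{1}{|\cC|} \sum_{\bfy \in \Ftwo^{n-2}} \sum_{\bfc \in \cC \cap \ballins{\bfy}{2}} \frac{\tauI^{(\cC)}(2,\bfy) - 1}{|\balldel{\bfc}{2}|}.
\end{align*}
Every summand is nonnegative, because whenever the inner sum is nonempty we have $\tauI^{(\cC)}(2,\bfy) \geq 1$. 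So any upper bound on $|\balldel{\bfc}{2}|$ yields a lower bound on $\listdel{2}{\cC}$.

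The key step is the bound $|\balldel{\bfc}{2}| \leq \binom{\rho(\bfy)+5}{2}$ for every $\bfc \in \ballins{\bfy}{2}$. I will obtain it in two parts.
\begin{enumerate}
\item As in the proof of \Cref{th::upper-bound-2-ins}, inserting two symbols raises the number of runs by at most $4$, so $\rho(\bfc) \leq \rho(\bfy)+4$.
\item I will show $|\balldel{\bfx}{2}| \leq \binom{\rho(\bfx)+1}{2}$ for any $\bfx$. This follows the argument of \Cref{cl::sum-runs-vt-code}: the result of two deletions depends only on which runs the deleted symbols come from, not on their positions inside a run. The two runs can be chosen as an unordered pair with repetition allowed, giving at most $\binom{\rho(\bfx)}{2} + \rho(\bfx) = \binom{\rho(\bfx)+1}{2}$ outcomes. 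This is also Levenshtein's classical bound $\binom{\rho+t-1}{t}$ with $t=2$.
\end{enumerate}
Combining the two parts and using monotonicity of $\binom{m}{2}$ in $m$ gives $|\balldel{\bfc}{2}| \leq \binom{\rho(\bfy)+5}{2}$.

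Substituting this bound, the inner sum has exactly $\tauI^{(\cC)}(2,\bfy)$ identical terms. It is therefore at least
\begin{equation*}
\frac{\tauI^{(\cC)}(2,\bfy)\,(\tauI^{(\cC)}(2,\bfy)-1)}{\binom{\rho(\bfy)+5}{2}} = \frac{2\binom{\tauI^{(\cC)}(2,\bfy)}{2}}{\binom{\rho(\bfy)+5}{2}}.
\end{equation*}
This also holds trivially when $\tauI^{(\cC)}(2,\bfy)=0$. Summing over $\bfy$ and dividing by $|\cC|$ gives the claim. The hypothesis $n \geq 3$ only ensures that $n-2 \geq 1$, so the sum over $\Ftwo^{n-2}$ and the run counts are meaningful.

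The only real obstacle is making the run-based bound on $|\balldel{\bfx}{2}|$ rigorous. I must check that deletions within the same run commute to a single outcome, and that a deletion which merges runs does not create extra outcomes. Since the count only needs to be an upper bound, the injection from pairs of runs onto outcomes is enough.
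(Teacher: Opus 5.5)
Your proof is correct and is essentially the paper's argument: both rest on the bound $|\balldel{\bfc}{2}| \le \binom{\rho(\bfc)+1}{2} \le \binom{\rho(\bfy)+5}{2}$ for every codeword $\bfc \in \ballins{\bfy}{2}$, followed by a double count that yields $\tauI^{(\cC)}(2,\bfy)\bigl(\tauI^{(\cC)}(2,\bfy)-1\bigr)$ per received word $\bfy$. The only difference is bookkeeping. You reindex directly over pairs $(\bfy,\bfc)$, whereas the paper first rewrites the inner sum as $\sum_{\bfc' \neq \bfc} |\balldel{\bfc}{2} \cap \balldel{\bfc'}{2}|$ and symmetrizes over unordered codeword pairs before summing over $\bfy$. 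Your version is therefore slightly shorter, and it also justifies Levenshtein's $t=2$ bound directly; note that the map from multisets of runs to outcomes is a surjection, not an injection as you phrased it.
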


\begin{IEEEproof}
Consider a codeword $\bfc \in \cC$ and some $\bfy \in \balldel{\bfc}{2}$. It is easy to see that
\begin{align}
    \sum_{\bfy \in \balldel{\bfc}{2}} (\tauI^{(\cC)}(2,\bfy) - 1)
    &= \sum_{\bfy \in \balldel{\bfc}{2}} (|\ballins{\bfy}{2} \cap \cC| - 1) \nonumber \\
    &= \sum_{\bfy \in \balldel{\bfc}{2}} \sum_{\bfc' \in \cC \setminus \{\bfc\}} \mathbbm{1}(\bfc' \in \ballins{\bfy}{2}) \nonumber \\
    &= \sum_{\bfc' \in \cC \setminus \{\bfc\}} \sum_{\bfy \in \balldel{\bfc}{2}} \mathbbm{1}(\bfy \in \balldel{\bfc'}{2}) \nonumber  \\
    &= \sum_{\bfc' \in \cC \setminus \{\bfc\}} |\balldel{\bfc}{2} \cap \balldel{\bfc'}{2}|. \nonumber 
\end{align}
The first equality just uses the definition of $\tauI$. The second equality is trivial while the third uses the fact that both sides count pairs $(\bfy, \bfc')$ such that $\bfc' \in C \setminus \{\bfc \}$ and $\bfy \in \balldel{\bfc}{2}$. The last equality follows by the definition of $\balldel{\bfc}{2}$.

Plugging this into $\listdel{2}{\cC}$ given in \Cref{eq::def-ld}, we get
\begin{align}
    &\listdel{2}{\cC} 
    = \frac{1}{|\cC|} \sum_{\bfc \in \cC} \frac{1}{|\balldel{\bfc}{2}|} \sum_{\bfc' \in \cC \setminus \{\bfc\}} |\balldel{\bfc}{2} \cap \balldel{\bfc'}{2}| \nonumber \\
    &= \frac{1}{|\cC|}\!\! \sum_{\substack{\{ \bfc,\bfc'\} \subseteq \cC, \bfc \neq \bfc'}}\!\!\!\!\!\!\!\! \!\!\! |\balldel{\bfc}{2} \cap \balldel{\bfc'}{2}| \!\left(\frac{1}{|\balldel{\bfc}{2}|} \!+\! \frac{1}{|\balldel{\bfc'}{2}|}\right)\!.\!\label{eq::ldel-t} 
\end{align}

Levenshtein's upper bound on $|\balldel{\cdot}{t}|$ from \cite{levenshtein2002bounds} states that $|\balldel{\bfc}{t}| \leq \binom{\rho(\bfc) + t-1}{t}$. 
Now, notice that for any $\bfy \in \balldel{\bfc}{t}$, it holds that $\rho(\bfy) \geq \rho(\bfc) - 2t$. Thus, every $\bfy \in \balldel{\bfc}{t}$ satisfies
\begin{align}
    |\balldel{\bfc}{t}| \leq \binom{\rho(\bfy) + 3t-1}{t}. \nonumber
\end{align}

We apply the preceding bound to (\ref{eq::ldel-t}) to arrive at
\begin{align}
    &\listdel{2}{\cC} \nonumber \\
    &\geq \frac{1}{|\cC|} \sum_{\substack{\{ \bfc,\bfc'\} \subseteq \cC, \bfc \neq \bfc'}}  \sum_{\bfy \in \Ftwo^{n-2}} \indicator(\bfy \in \balldel{\bfc}{2} \cap \balldel{\bfc'}{2}) \frac{2}{\binom{\rho(\bfy) + 5}{2}} \nonumber \\
    &= \frac{1}{|\cC|}\sum_{{\bfy \in \Ftwo^{n-2}}} \frac{2}{\binom{\rho(\bfy) + 5}{2}} \sum_{\substack{\{ \bfc,\bfc' \} \subseteq \cC , \bfc \neq \bfc'}} \indicator(\bfy \in \balldel{\bfc}{2} \cap \balldel{\bfc'}{2}) \nonumber \\
    &= \frac{2}{|\cC|} \sum_{\bfy \in \Ftwo^{n-2}} \frac{1}{\binom{\rho(\bfy) + 5}{2}} \binom{|\ballins{\bfy}{2} \cap \cC|}{2}, \label{eq::list-2-del-c-lower}
\end{align}
where the final equality follows since $\sum_{\{ \bfc,\bfc' \} \subseteq \cC,\bfc \neq \bfc'} \indicator(\bfy \in \balldel{\bfc}{2} \cap \balldel{\bfc'}{2})$ counts how many pairs of codewords have $\bfy$ as a subsequence, which is exactly $\binom{|\ballins{\bfy}{2} \cap \cC|}{2}$.
\end{IEEEproof}

The preceding lemma suggests that establishing a lower bound on $\listdel{2}{\vt{a}{n}}$ for any $a \in [0,n]$ requires us to lower bound the summation in (\ref{eq::list-2-del-c-lower}), which seems to be a challenging task to accomplish for a specific instance of $a$. Instead, we seek to derive a lower bound on $\listdel{2}{\vt{a}{n}}$ averaged over all $a \in [0,n]$, as follows.

\begin{theorem}  \label{th::lower-bound-2-del-avg-vt}
    For any $n\geq 3$, it holds that 
    \begin{align*}
        \frac{1}{n+1} \sum_{a=0}^{n} \listdel{2}{\vt{a}{n}} > \frac{n}{2} - 10. 
    \end{align*}    
\end{theorem}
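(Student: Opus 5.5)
We will start from \Cref{lem::1-del-code-ld} applied to each $\vt{a}{n}$, average over $a \in [0,n]$, and use the fact that the VT codes partition $\Ftwo^n$.

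\emph{Step 1: reduce to the sizes $N_a(\bfy)$.} Fix $\bfy \in \Ftwo^{n-2}$ and write $N_a(\bfy) \triangleq \tauI^{(\vt{a}{n})}(2,\bfy)$. Since the VT codes partition $\Ftwo^n$, we have $\sum_{a=0}^n N_a(\bfy) = |\ballins{\bfy}{2}| = I_2(n-2) = \binom{n}{2}+n+1 = \frac{n^2+n+2}{2}$. Replace $1/|\vt{a}{n}|$ by the upper bound from \Cref{lem::vt-bounds}, namely $|\vt{a}{n}| < \frac{2^n}{n+1}(1+n2^{-2(n+1)/3})$. This gives
\begin{align*}
\frac{1}{n+1}\sum_{a} \listdel{2}{\vt{a}{n}} \geq \frac{2}{2^n(1+\epsilon_n)} \sum_{\bfy\in\Ftwo^{n-2}} \frac{1}{\binom{\rho(\bfy)+5}{2}} \sum_{a=0}^n \binom{N_a(\bfy)}{2},
\end{align*}
where $\epsilon_n = n2^{-2(n+1)/3}$.

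\emph{Step 2: convexity.} Apply Jensen's inequality to $x\mapsto\binom{x}{2}$ over the $n+1$ values of $a$. This yields
\begin{align*}
\sum_a \binom{N_a}{2} \geq (n+1)\binom{I_2(n-2)/(n+1)}{2}.
\end{align*}
Since $I_2(n-2)/(n+1) = \frac{n^2+n+2}{2(n+1)} \geq n/2$, the right-hand side is at least $(n+1)\frac{(n/2)(n/2-1)}{2} = \frac{(n+1)n(n-2)}{8}$.

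\emph{Step 3: control the run weight.} We then need a lower bound on $\frac{1}{2^{n-2}}\sum_{\bfy} \binom{\rho(\bfy)+5}{2}^{-1}$, i.e., on $\mathbb{E}[1/\binom{\rho+5}{2}]$ for a uniform $\bfy \in \Ftwo^{n-2}$. Here $\rho(\bfy) = 1 + \mathrm{Bin}(n-3,1/2)$. Since $x\mapsto 2/((x+5)(x+4))$ is convex, Jensen gives $\mathbb{E} \geq 2/((\mathbb{E}\rho+5)(\mathbb{E}\rho+4))$ with $\mathbb{E}\rho = (n-1)/2$. That is $\frac{8}{(n+9)(n+7)}$.

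\emph{Step 4: combine.} Putting the pieces together gives
\begin{align*}
\frac{2\cdot 2^{n-2}}{2^n(1+\epsilon_n)}\cdot \frac{8}{(n+9)(n+7)}\cdot\frac{(n+1)n(n-2)}{8} = \frac{(n+1)n(n-2)}{2(1+\epsilon_n)(n+7)(n+9)}.
\end{align*}
This is roughly $\frac{n}{2}(1-\frac{17}{n}) \approx \frac{n}{2} - 8.5$, up to the $\epsilon_n$ correction.

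The main obstacle is the bookkeeping in this final step: we must verify that the expression exceeds $n/2-10$ for all $n\ge3$. Using $(1+\epsilon_n)^{-1}\geq 1-\epsilon_n$ and expanding $(n+1)n(n-2) - (n-20)(n+7)(n+9)$ should give a positive polynomial margin for large $n$. The small cases, where $n/2-10<0$ and the bound is trivial since the list size is nonnegative (roughly $n\le 20$), can be dispatched separately. If the constants turn out to be too tight, the fallback is to avoid the Jensen step in $\rho$. Instead we would use the concentration of $\rho$ around $(n-1)/2$: split off atypical $\bfy$ with $\rho > n/2 + O(\sqrt n)$ and bound the remainder directly. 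Keeping the exact value $I_2(n-2)/(n+1) = \frac{n}{2}+\frac{1}{n+1}$ rather than $n/2$ also recovers some slack.
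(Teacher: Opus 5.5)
Your proposal is correct and follows essentially the same route as the paper: \Cref{lem::1-del-code-ld} with the VT size bound, Jensen on $\binom{x}{2}$ using that the VT codes partition $\Ftwo^n$, and Jensen over the binomially distributed run count. You also correctly keep the factor $n$ in $\epsilon_n=n2^{-2(n+1)/3}$ from \Cref{lem::vt-bounds}, which the paper's proof drops. The only real deviation is that you apply Jensen directly to the convex map $x\mapsto 1/\binom{x+5}{2}$, rather than first relaxing it to $2/(x+5)^2$, which gives the slightly better denominator $(n+7)(n+9)$ instead of $(n+9)^2$. Your deferred final check also goes through for every $n\ge 3$ with no case split: $(n+1)n(n-2)-(n-20)(n+7)(n+9)=3n^2+255n+1260$ gives a margin above $3/2$, while the $\epsilon_n$-loss is at most $\frac{n^2}{2}2^{-2(n+1)/3}<1$.
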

\begin{IEEEproof}
From \Cref{lem::1-del-code-ld} and \Cref{prop::vta-lower bound}, it directly follows that
\begin{align*}
    &\frac{1}{n+1} \sum_{a=0}^{n} \listdel{2}{\vt{a}{n}} \\
    &\geq \! \frac{2}{2^n (1+2^{-\frac{2(n+1)}{3}})} \! \sum_{\bfy \in \Ftwo^{n-2}} \!\! \frac{1}{\binom{\rho(\bfy)+5}{2}} \!\sum_{a=0}^{n} \! \binom{\tauI^{(\vt{a}{n})}(2,\bfy)}{2}.
\end{align*}

Since $\bigcup_{a \in [0,n]} \vt{a}{n} = \Ftwo^n$ and all VT codes are disjoint, we have $\sum_{a=0}^{n}|\ballins{\bfy}{2} \cap \vt{a}{n}| = |\ballins{\bfy}{2}| = I_2(n-2) = (n^2+n+2)/2 > n(n+1)/2$. Note that the convexity of the binomial coefficient allows us to use Jensen's inequality~\cite{mitzenmacherProbabilityComputingRandomized2005} to write
\begin{align}
    \sum_{a=0}^{n} \binom{|\ballins{\bfy}{2} \cap \vt{a}{n}|}{2} &\geq (n+1) \binom{\frac{I_2(n-2)}{(n+1)}}{2} \nonumber \\
    &>(n+1) \binom{n/2}{2} \nonumber\\
    &= \frac{(n+1)n(n-2)}{8}. \label{eq::bound-i2-vta}
\end{align}
Furthermore, observe that
\begin{align}
    \sum_{\bfy \in \Ftwo^{n-2}} \frac{1}{\binom{\rho(\bfy)+5}{2}} &> \sum_{r=1}^{n-2} 2\binom{n-3}{r-1} \frac{2}{(r+5)^2} \nonumber \\
    &= 4\sum_{r=0}^{n-3} \binom{n-3}{r} \frac{1}{(r+6)^2}. \label{eq::binom-sum}
\end{align}

To bound the preceding summation, we note from \Cref{lem::binom-sum-jensen} in Appendix~\ref{app::sum-binom-jensen} that
\begin{align}
    \sum_{r=0}^{n-3} \binom{n-3}{r} \frac{1}{(r+6)^{2}} &\geq \frac{2^{n-3}}{(\frac{n-3}{2}+6)^2}.\label{eq::binom-sum2}
\end{align}

Upon combining (\ref{eq::bound-i2-vta}), (\ref{eq::binom-sum}) and (\ref{eq::binom-sum2}), we are led to
\begin{align}
    \frac{1}{n+1} \!\! \sum_{a=0}^{n} \! \listdel{2}{\vt{a}{n}} \!
    &> \! \frac{(n+1)n(n-2)}{8(1+2^{-\frac{2(n+1)}{3}})(\frac{n-3}{2}+6)^2} \nonumber \\
    \!&= \!\! \frac{(n+1)n(n-2)}{2(1+2^{-\frac{2(n+1)}{3}})(n+9)^2}.\!\! \label{eq::lower-bound-ld}
\end{align}

To simplify this bound further, we let $g(n) \triangleq \frac{n(n+1)(n-2)}{2(n+9)^2}$ and $u(n) \triangleq 2^{-\frac{2(n+1)}{3}}$. We note that
\begin{align*}
    \frac{n}{2} - g(n) = \frac{n(19n+83)}{2(n+9)^2} < \frac{19}{2}.
\end{align*}
Moreover $g(n) < n/2$, so $\frac{g(n)}{1+u(n)} > g(n) - g(n)u(n) > g(n) - \frac{n}{2}2^{-\frac{2(n+1)}{3}}$. Since $\frac{n}{2}2^{-\frac{2(n+1)}{3}}$ is decreasing on $n \geq 3$, it achieves its maximum value $\frac{3}{2}2^{-8/3} < \frac{1}{4}$ at $n=3$, thus allowing us to write $\frac{g(n)}{1+u(n)} > g(n) - 1/4$. Finally, we apply this to (\ref{eq::lower-bound-ld}) to arrive at
\begin{align*}
    \frac{1}{n+1} \sum_{a=0}^{n} \listdel{2}{\vt{a}{n}} > \frac{n}{2} - \frac{19}{2} - \frac{1}{4} > \frac{n}{2} - 10 . \qedhere
\end{align*}
\end{IEEEproof}

\begin{remark}
    \Cref{th::lower-bound-2-del-avg-vt} aligns with the result derived by an intuitive argument analogous to the one offered by McEliece in \cite{mcelieceAverageListSize2003} for the substitution-only case. This approach interprets $\listdel{2}{\vt{a}{n}}$ as approximately equal to the average number of codewords in a $2$-insertion sphere of a vector drawn uniformly at random from $\Ftwo^{n-2}$. That is, 
\begin{align*}
    \listdel{2}{\vt{a}{n}} &\approx \frac{|\vt{a}{n}|}{2^n} I_2(n-2) \\
    &\approx \frac{1}{n+1} \sum_{i=0}^{2} \binom{n-2+2}{i} = \frac{n}{2} + \frac{1}{n+1}.
\end{align*}
\end{remark}

\appendix

\subsection{Bounds on Cardinalities of VT Codes} \label{app::vt-upper-bound} \label{app::vta-lower}

We know from \cite[Corollary 2.3]{sloaneSingleDeletionCorrectingCodes2002} that for any $a \in [0,n]$,
\begin{equation}
    |\vt{1}{n}| \leq |\vt{a}{n}| \leq |\vt{0}{n}|. \label{eq::vta-bounds}
\end{equation}

Thus, to prove \Cref{lem::vt-bounds}, it suffices to upper-bound $|\vt{0}{n}|$ and to lower-bound $|\vt{1}{n}|$, as has been done in \Cref{lem::vt0-code-upper-bound} and \Cref{prop::vt1-lower-bound}, respectively. Combining these bounds with (\ref{eq::vta-bounds}) concludes the proof.

\begin{lemma} \label{lem::vt0-code-upper-bound}
    For any $n \geq 2$, it holds that
    \begin{equation*}
        |\vt{0}{n}| < \frac{2^n}{n+1} \parenv*{1+2^{-2(n+1)/3}}.
    \end{equation*}
\end{lemma}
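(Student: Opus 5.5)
The plan is to start from the exact enumeration of $\vt{0}{n}$ and treat it as a main term plus a tail. Write $N \triangleq n+1$. The classical formula (Ginzburg; see also \cite{sloaneSingleDeletionCorrectingCodes2002}) states
\begin{equation*}
|\vt{0}{n}| = \frac{1}{2N}\sum_{\substack{d \mid N\\ d \text{ odd}}} \varphi(d)\, 2^{N/d},
\end{equation*}
where $\varphi$ is Euler's totient. I would either quote this or rederive it. For the derivation, count $\bfc$ with $\sum_i i c_i \equiv 0 \bmod N$ using the character sum $\frac{1}{N}\sum_{k=0}^{N-1}\prod_{i=1}^{n}(1+\omega^{ki})$ with $\omega = e^{2\pi \mathrm{i}/N}$. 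Group the values of $k$ by $d = N/\gcd(k,N)$ and evaluate $\prod_{i=1}^{d-1}(1+\zeta^i)$ for a primitive $d$-th root of unity $\zeta$. This product equals $1$ for $d$ odd and $0$ for $d$ even.

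The $d=1$ term equals $\frac{2^N}{2N} = \frac{2^n}{n+1}$, which is exactly the main term of the lemma. The claim is therefore equivalent to the tail estimate
\begin{equation*}
T(N) \triangleq \sum_{\substack{d \mid N,\ d>1\\ d \text{ odd}}} \varphi(d)\, 2^{N/d} \;<\; 2^{N}\, 2^{-2N/3} = 2^{N/3}.
\end{equation*}
If $N$ is a power of two, the tail is empty and the inequality is immediate. Otherwise every index satisfies $d \geq 3$, so $N/d \le N/3$. The natural bound is to separate the term with the smallest odd divisor $d_0 \ge 3$ from the remaining terms. The remaining terms satisfy $N/d \le N/5$ and number at most $N$, each with $\varphi(d) < N$. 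This gives a contribution of at most $N^2 2^{N/5}$, which is $o(2^{N/3})$, and a finite check covers small $N$.

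The main obstacle is the leading tail term $d_0 = 3$, i.e.\ when $3 \mid N$. There the term $\varphi(3)2^{N/3} = 2\cdot 2^{N/3}$ alone already exceeds the target $2^{N/3}$. The same happens at $n=2$, where direct enumeration gives $|\vt{0}{2}| = |\{00,11\}| = 2$ while the stated bound equals $5/3$. So before carrying out the estimate I would recheck the exact form of the exponent and constant: the formula, the normalization $\frac{1}{2N}$, and whether the relative error should read $2\cdot 2^{-2(n+1)/3}$ or something similar. Only after that would I close the argument. When $3 \nmid N$ the inequality should follow cleanly from the estimate above, since then $d_0 \ge 5$ and all of $T(N)$ is at most $N^2 2^{N/5}$.
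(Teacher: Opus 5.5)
Your counterexample is correct, and you should state it as a conclusion rather than a caveat: the lemma is false as written. With $N=n+1$ the lemma is equivalent to $T(N)<2^{N/3}$. Whenever $3\mid N$, the $d=3$ term alone gives $T(N)\geq 2\cdot 2^{N/3}$, so the claim fails for every $n\equiv 2 \pmod{3}$ (e.g.\ $|\vt{0}{2}|=2>5/3$ and $|\vt{0}{5}|=6>17/3$).

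The paper's proof starts exactly as you do, with the same enumeration formula and the same split into the $d=1$ term and a tail. It bounds the tail by $2^{N/3}\sum_{d\mid N,\ d\ \text{odd},\ d\geq 3}\varphi(d)\leq (N-1)\,2^{N/3}=n\,2^{N/3}$ via Gauss's identity $\sum_{d\mid N}\varphi(d)=N$, and obtains
\[
|\vt{0}{n}|\leq \frac{2^n}{n+1}+\frac{n}{n+1}\,2^{(n-2)/3}=\frac{2^n}{n+1}\bigl(1+n\,2^{-2(n+1)/3}\bigr).
\]
The stated bound only allows the excess $\frac{1}{n+1}2^{(n-2)/3}$, and the closing remark ``since $\frac{n}{n+1}<1$'' does not remove the factor $n$. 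So the argument actually proves the version with the factor $n$, which is the form in Lemma~\ref{lem::vt-bounds}. Even that version is an equality rather than a strict inequality at $n=2$.

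Two of your side claims are wrong, though. First, the case $3\nmid N$ does not close after a finite check: $T(N)<2^{N/3}$ also fails for $N\in\{5,7,10,11,13\}$. For instance, $\vt{0}{4}=\{0000,1001,0110,1111\}$ has four elements, while the stated bound is $\frac{16}{5}(1+2^{-10/3})\approx 3.52$. Your estimate $N^2 2^{N/5}<2^{N/3}$ only starts to hold around $N\approx 100$, and the check below that threshold is exactly where the claim breaks. Second, your candidate repair $2\cdot 2^{-2(n+1)/3}$ also fails, e.g.\ $T(5)=8>2\cdot 2^{5/3}$ and $T(9)=28>2\cdot 2^{3}$. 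The correct repair is the factor $n$. With it, no asymptotics or case analysis is needed: bound each $2^{N/d}$ with $d\geq 3$ by $2^{N/3}$ and use $\sum_{d\mid N,\,d>1}\varphi(d)=n$.
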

\begin{IEEEproof}
 \cite[Corollary~2.3]{sloaneSingleDeletionCorrectingCodes2002} states that
    \begin{align}
        |\vt{0}{n}| = \frac{1}{2(n+1)} \sum_{\substack{d|n+1 \\ d \text{ odd}}} \phi(d)2^{(n+1)/d}, \label{eq::vt0-1}
    \end{align}
    where $\phi(d)$ (known as the Euler totient function \cite{EulersTotientFunction2026}) counts the positive integers up to the integer $d$ that are relatively prime to $d$. To simplify the summation in the preceding expression, we first use $\phi(1) = 1$ to arrive at
    \begin{align}
        \sum_{\substack{d|n+1 \\ d \text{ odd}}} \phi(d)2^{(n+1)/d}\leq 2^{n+1} + 2^{(n+1)/3}\sum_{\substack{d|n+1 \\ d \text{ odd}, d \geq 3}} \!\!\!\! \phi(d), \label{eq::vt0-phi-d}
    \end{align}
    
    Next, we apply Gauss' observation~\cite{gauss1801disquisitiones} that ${\sum_{d|n} \phi(d) =n}$. This leads us to
    \begin{align}
        \sum_{\substack{d|n+1 \\ d \text{ odd}, d \geq 3}} \phi(d) \leq \sum_{d|n+1} \phi(d) - \phi(1) = n. \label{eq::vt0-3}
    \end{align}
        
    Upon incorporating (\ref{eq::vt0-phi-d}) and (\ref{eq::vt0-3}) into (\ref{eq::vt0-1}), we obtain
    \begin{align*}
        |\vt{0}{n}| &\leq \frac{2^{n}}{n+1}  + 2^{(n-2)/3} \frac{n}{n+1}.
    \end{align*}
    Since $\tfrac{n}{n+1} < 1$, the statement of the lemma follows.
\end{IEEEproof}


\begin{lemma} \label{prop::vt1-lower-bound}
    For any $n \geq 2$, it holds that
    \begin{equation*}
        |\vt{1}{n}| > \frac{2^n}{n+1} \parenv*{1-2^{-\frac{2n-1}{3}}}.
    \end{equation*}
\end{lemma}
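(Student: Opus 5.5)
We need a lower bound on $|\vt{1}{n}|$. The plan is to mirror the proof of \Cref{lem::vt0-code-upper-bound}, using the exact Ginzburg-type formula for the VT code sizes quoted in \cite{sloaneSingleDeletionCorrectingCodes2002}:
\begin{equation*}
|\vt{a}{n}| = \frac{1}{2(n+1)} \sum_{\substack{d \mid n+1 \\ d \text{ odd}}} \phi(d)\, \mu\!\parenv*{\frac{d}{\gcd(d,a)}} \frac{1}{\phi\parenv*{d/\gcd(d,a)}}\, 2^{(n+1)/d}.
\end{equation*}
For $a=1$ we have $\gcd(d,1)=1$, so the summand for divisor $d$ reduces to $\mu(d)\,2^{(n+1)/d}$. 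This gives
\begin{equation*}
|\vt{1}{n}| = \frac{1}{2(n+1)}\sum_{\substack{d \mid n+1 \\ d \text{ odd}}} \mu(d)\,2^{(n+1)/d}.
\end{equation*}

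The $d=1$ term contributes exactly $2^{n+1}$. Since $|\mu(d)|\le 1$ and the remaining odd divisors satisfy $d\ge 3$, each remaining term is at least $-2^{(n+1)/3}$, so
\begin{equation*}
\sum_{d} \mu(d) 2^{(n+1)/d} \ge 2^{n+1} - \sum_{\substack{d\mid n+1,\ d\text{ odd},\ d\ge 3}} 2^{(n+1)/d}.
\end{equation*}
It remains to bound this tail. The cheap bound (number of divisors) $\cdot\, 2^{(n+1)/3}$ costs a factor of order $n$, which the target $1-2^{-(2n-1)/3}$ does not allow. Instead I use the geometric decay of the terms. For $d\ge 3$ we have $(n+1)/d \le (n+1)/3$. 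Moreover, the values $(n+1)/d$ for distinct odd divisors $d\ge3$ are distinct integers $m \le (n+1)/3$, so the tail is at most
\begin{equation*}
\sum_{m=1}^{\floor{(n+1)/3}} 2^m < 2^{(n+1)/3+1}.
\end{equation*}
Hence
\begin{equation*}
|\vt{1}{n}| > \frac{2^{n+1}-2^{(n+4)/3}}{2(n+1)} = \frac{2^n}{n+1}\parenv*{1-2^{(n+4)/3-(n+1)}} = \frac{2^n}{n+1}\parenv*{1-2^{-\frac{2n-1}{3}}},
\end{equation*}
which is exactly the claim, since $(n+4)/3-(n+1) = -(2n-1)/3$. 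The strict inequality also covers the case where no odd $d\ge3$ divides $n+1$, because then the tail vanishes.

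The main obstacle is justifying the exact formula for $|\vt{1}{n}|$. If only the $a=0$ formula from \cite[Corollary 2.3]{sloaneSingleDeletionCorrectingCodes2002} is admissible, I would rederive it by a character-sum (roots-of-unity) computation:
\begin{equation*}
|\vt{a}{n}| = \frac{1}{n+1}\sum_{k=0}^{n}\omega^{-ka}\prod_{i=1}^{n}\parenv*{1+\omega^{ki}}, \qquad \omega=e^{2\pi i/(n+1)}.
\end{equation*}
Grouping $k$ by $d=(n+1)/\gcd(k,n+1)$, the product vanishes for even $d$ and equals $2^{(n+1)/d}/2$ for odd $d$. The sum of $\omega^{-k}$ over primitive $d$-th roots of unity is the Ramanujan sum $c_d(1)=\mu(d)$, which yields the formula above.
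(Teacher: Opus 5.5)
Your proposal is correct and follows the paper's proof essentially step for step. Like the paper, you start from the M\"obius-function formula for $|\vt{1}{n}|$ in \cite[Corollary~2.3]{sloaneSingleDeletionCorrectingCodes2002}, isolate the $d=1$ term, and bound the tail by the geometric sum $\sum_{m=1}^{\floor{(n+1)/3}} 2^m < 2^{(n+1)/3+1}$; your explicit remark that the exponents $(n+1)/d$ are distinct, and your roots-of-unity rederivation of the formula, spell out what the paper leaves implicit or simply cites.
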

\begin{IEEEproof}
    We know from \cite[Corollary 2.3]{sloaneSingleDeletionCorrectingCodes2002} that
    \begin{align}
        |\vt{1}{n}| = \frac{1}{2(n+1)} \sum_{\substack{d|n+1 \\ d \text{ odd}}} \mu(d)2^{(n+1)/d}, \label{eq::vt1-size}
    \end{align}
    where $\mu(d)$ is called the Möbius function~\cite{mobius1832} and is defined as
    \begin{align*}
        \mu(d) = \begin{cases}
            1 \!\! & \!\! \text{if } d=1, \\
            (-1)^k \!\! & \!\! \text{if $d$ is the product of $k$ distinct primes,} \\
            0 \!\! & \!\! \text{if $d$ is divisible by a square greater than $1$.}            
        \end{cases}
    \end{align*}

    Since $\mu(d)$ can only assume one of the values in $\{-1,0,1\}$, it is easy to see that the summation in (\ref{eq::vt1-size}) is subject to the following lower bound.
    \begin{align}
        \sum_{\substack{d|n+1 \\ d \text{ odd}}} \mu(d)2^{(n+1)/d} &\geq  2^{n+1} - \sum_{\substack{d|n+1 \\ d \text{ odd}, d \geq 3}} 2^{(n+1)/d}. \label{eq::mu-d-sum}
    \end{align}

    Note that in the summation on the right hand side, each term is of the form $2^a$ where $a \in [1, \floor{\frac{n+1}{3}}]$. This allows us to simplify the bound in (\ref{eq::mu-d-sum}) as
    \begin{align*}
        \sum_{\substack{d|n+1 \\ d \text{ odd}}} \mu(d)2^{(n+1)/d} &\geq  2^{n+1} - (2 + 2^2 + \ldots + 2^{\floor{\frac{n+1}{3}}}) \\
        &= 2^{n+1} - (2^{\floor{\frac{n+1}{3}}+1} - 2) \\
        &> 2^{n+1} - 2^{\frac{n+1}{3}+1} = 2^{n+1} \parenv*{\!1-2^{-\frac{2n-1}{3}}\!}.
    \end{align*}

    Upon applying this to (\ref{eq::vt1-size}), we are led to the statement of~the lemma.    
\end{IEEEproof}

\subsection{Auxiliary Lemma for Proof of \Cref{th::lower-bound-2-del-avg-vt}} \label{app::sum-binom-jensen}

\begin{lemma} \label{lem::binom-sum-jensen}
    For any integer $n \geq 1$, it holds that
    \begin{align*}
        \sum_{r=0}^{n} \binom{n}{r} \frac{1}{(r+6)^{2}} \geq \frac{2^{n}}{\left(\frac{n}{2}+6\right)^2}.
    \end{align*}
\end{lemma}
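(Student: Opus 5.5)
The plan is to read the left-hand side as an expectation and apply Jensen's inequality. Divide both sides by $2^n$. Then
\[
\frac{1}{2^n}\sum_{r=0}^{n}\binom{n}{r}\frac{1}{(r+6)^2}=\mathbb{E}\bigl[f(R)\bigr],
\]
where $R\sim\mathrm{Binomial}(n,1/2)$ and $f(x)\triangleq (x+6)^{-2}$. The right-hand side divided by $2^n$ is exactly $f(\mathbb{E}[R])=f(n/2)$. So the claim is equivalent to $\mathbb{E}[f(R)]\geq f(\mathbb{E}[R])$.

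The key step is to show that $f$ is convex on the support $[0,n]$, or more generally on $x>-6$. We compute
\[
f''(x)=\frac{6}{(x+6)^4}>0 \qquad \text{for all } x>-6,
\]
so $f$ is convex on an interval containing every value $R$ can take. Jensen's inequality then gives $\mathbb{E}[f(R)]\geq f(\mathbb{E}[R])=\bigl(\tfrac{n}{2}+6\bigr)^{-2}$. Multiplying back by $2^n$ yields the statement.

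There is no real obstacle. The only points to check are that the weights $\binom{n}{r}2^{-n}$ form a probability distribution with mean $n/2$, and that convexity holds on the relevant interval. The hypothesis $n\geq 1$ is not actually needed, since the bound holds with equality at $n=0$.
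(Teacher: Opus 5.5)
Your proof is correct and matches the paper's: both write the sum as $2^n\,\mathbb{E}[(X+6)^{-2}]$ with $X\sim\mathrm{Bin}(n,1/2)$ and apply Jensen's inequality to the convex function $(x+6)^{-2}$. Your explicit check $f''(x)=6(x+6)^{-4}>0$ and your remark that equality holds at $n=0$, so the hypothesis $n\geq 1$ is unnecessary, are both accurate.
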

\begin{IEEEproof}
    Assume a random variable $X \sim \mathrm{Bin}(n,1/2)$ so that for any $r \in [0,n]$, we have $\prob(X = r) = 2^{-n} \binom{n}{r}$. Since $(r+6)^{-2}$ is a convex function of $r$, we can apply Jensen's inequality~\cite{mitzenmacherProbabilityComputingRandomized2005} to write
    \begin{align}
        \sum_{r=0}^{n} \binom{n}{r} \frac{1}{(r+6)^{2}} &= 2^{n} \mathbb{E}\sparenv*{\frac{1}{(X+6)^2}}\nonumber \\
        &\geq \frac{2^{n}}{(\mathbb{E}[X]+6)^2} = \frac{2^{n}}{(\frac{n}{2}+6)^2}.\nonumber \qedhere
    \end{align}
\end{IEEEproof}



\section*{AI statement}
No AI system was used to generate any part of this manuscript. Claude Opus 5.5 was used to find some of the references. 

\balance
\printbibliography

\end{document}